\documentclass[journal,12pt,onecolumn]{IEEEtran}
\usepackage{pict2e}
\usepackage{xcolor}
\usepackage{amsmath}
\usepackage{nccmath}
\usepackage{amssymb} 
\usepackage{amsthm}
\usepackage[linesnumbered,boxruled]{algorithm2e}
\usepackage{graphicx}
\usepackage[font=footnotesize]{caption}
\usepackage[font=footnotesize]{subcaption}
\usepackage[hidelinks=true]{hyperref}
\graphicspath{{Figs/}}
\theoremstyle{plain} \newtheorem{thm}{Theorem}
\theoremstyle{plain} \newtheorem{lem}{Lemma}
\theoremstyle{definition} \newtheorem{defi}{Definition}
\theoremstyle{definition} \newtheorem{corol}{Corollary}

\title{On Storage Allocation in Cache-Enabled Interference Channels with Mixed CSIT}
\author{\IEEEauthorblockN{
	Mohammad Ali Tahmasbi Nejad\IEEEauthorrefmark{1},
	Seyed Pooya Shariatpanahi\IEEEauthorrefmark{2}, and
	Babak Hossein Khalaj\IEEEauthorrefmark{1}
	}\\
	\IEEEauthorblockA{
		\IEEEauthorrefmark{1}Department of Electrical Engineering, Sharif University of Technology, Tehran, Iran\\
		\IEEEauthorrefmark{2}School of Computer Science, Institute for Research in Fundamental Sciences (IPM), Tehran, Iran\\
		E-mails: m\_tahmasbi@ee.sharif.edu,
		pooya@ipm.ir,
		khalaj@sharif.edu
	}}

\let\oldnl\nl
\newcommand{\nonl}{\renewcommand{\nl}{\let\nl\oldnl}}

\begin{document}
\maketitle

%%%%%%%%%%%%%%%%%%%%%%%%%%%%%%
%Abstract
%%%%%%%%%%%%%%%%%%%%%%%%%%%%%%
\begin{abstract}
	
Recently, it has been shown that in a cache-enabled interference channel, the storage at the transmit and receive sides are of equal value in terms of Degrees of Freedom (DoF). This is derived by assuming full Channel State Information at the Transmitter (CSIT). In this paper, we consider a more practical scenario, where a training/feedback phase should exist for obtaining CSIT, during which instantaneous channel state is not known to the transmitters. This results in a combination of delayed and current CSIT availability, called \emph{mixed CSIT}. In this setup, we derive DoF of a cache-enabled interference channel with mixed CSIT, which depends on the memory available at transmit and receive sides as well as the training/feedback phase duration. In contrast to the case of having full CSIT, we prove that, in our setup, the storage at the receive side is more valuable than the one at the transmit side. This is due to the fact that \emph{cooperation opportunities} granted by transmitters' caches are strongly based on instantaneous CSIT availability. However, \emph{multi-casting opportunities} provided by receivers' caches are robust to such imperfection.
\end{abstract}
%%%%%%%%%%%%%%%%%%%%%%%%%%%%%%

%%%%%%%%%%%%%%%%%%%%%%%%%%%%%%
%Keywords
%%%%%%%%%%%%%%%%%%%%%%%%%%%%%%
\begin{IEEEkeywords}
	Coded caching, content delivery, degrees of freedom, interference channels, mixed CSIT, wireless networks.
\end{IEEEkeywords}
%%%%%%%%%%%%%%%%%%%%%%%%%%%%%%

%%%%%%%%%%%%%%%%%%%%%%%%%%%%%%
%Introduction
%%%%%%%%%%%%%%%%%%%%%%%%%%%%%%
\section{Introduction}

Nowadays, the available bandwidth to transfer data has become one of the most important and costly communication resources due to increasing use of communication networks and significant exponential growth of the Internet traffic load. 
Considering the fact that Internet video traffic has become the main part of all consumer Internet traffic \cite{Cisco}, using novel approaches such as caching seems to be of high significance in reducing bandwidth consumption. Content Delivery Networks (CDNs) are  typical examples of enterprises using the caching techniques to deliver requested content to the consumers \cite{Pathan07}. Simply put, in such methods the geographically distributed storage units cache content during network low-peak hours based on the statistics of demands, given their cache sizes. Then, upon consumer requests arrival, this cached data will be used during network high-peak hours to relieve congestion. Both \emph{cache content placement} and \emph{content delivery} phases need careful design considerations to manage network congestion.

In this paper, we propose a novel caching approach that, in addition to the local caching gain, is able to achieve a global caching gain. This gain derives from jointly optimizing both the caching and delivery phases, ensuring that in the delivery phase several different demands can be satisfied with a single multicast transmission. Since the cache placement is performed without knowledge of the actual demands, in order to achieve this gain the placement must be carefully designed such that these multicasting opportunities are created simultaneously for all possible requests in the delivery phase. We show that this global caching gain is relevant if the aggregate global cache size is large enough compared to the total amount of content. Thus, even though the caches cannot cooperate, the sum of the cache sizes becomes an important system
parameter.

Employing the caching idea in communication networks has been the subject of many recent works. Shanmugam et al. propose using cache-enabled \emph{helpers} in cellular networks to offload traffic from congested base stations \cite{Shanmugam13}. In \cite{Golrezaei14}, the idea of Device-to-Device (D2D) communication with distributed caching has been investigated. Also, in \cite{Bastug14} and \cite{Perabathini15}, the authors analyze the role of proactive edge caching in 5G wireless networks. Finally, the throughput-memory trade-off for cache-enabled ad-hoc networks has been investigated in \cite{Gitzenis13}.

The concept of \emph{Coded Caching} introduced in \cite{Maddah-Ali14} considers the caching problem from an information theoretic perspective, resulting in substantial performance gains. Subsequent works extended the results to the D2D setting \cite{Ji16}, networks with hierarchical caches architecture \cite{Karamchandani16}, multi-server setup \cite{Shariatpanahi16}, cache-enabled interference channels \cite{Naderalizadeh16}, and MISO broadcast channels with delayed CSIT \cite{Zhang15}, \cite{Zhang16}.

In the present study, we consider an interference channel where both transmitters and receivers are equipped with cache memories of limited size. In the cache content placement phase, the caches of both sides are filled with content from a library, without knowing the actual requests of the receivers. In the content delivery phase, by knowing the receivers' requests, the transmitters send proper signals on the wireless medium to fulfill these demands. In contrast to \cite{Naderalizadeh16}, which considers the same setup with full CSIT, here we assume a mixed CSIT setting. In other words, in order to provide CSI at the transmit side, the transmitters send  training symbols at the beginning of each time coherence block. Based on this training process, the CSI is fed back to the transmitters, leading to delay in obtaining CSIT. This model is closer to practical scenarios where CSIT should be provided by some sort of feedback \cite{Love08}.

As shown in \cite{Naderalizadeh16}, with full CSIT assumption, increasing cache size at the transmitters or receivers results in higher DoF. That is due to the fact that, by providing memory at the transmitters, they can cooperate in sending data, and the more the storage size is, the higher the cooperation chances will be. On the other hand, receive-side memories provide multi-casting opportunities as discussed in \cite{Maddah-Ali14}.  However, the interesting result of \cite{Naderalizadeh16} emphasizes that the transmitter and receiver memories are of equal importance in terms of DoF, when there is full CSIT. On the contrary, in this paper with the mixed CSIT model, we prove that the receive-side caches are much more valuable than the ones at transmit side. In fact, the cooperation of transmitters heavily relies on perfect instantaneous CSIT, which diminishes in practical mixed CSIT scenarios. The multi-casting opportunities provided by receive-side caches, however, are not sensitive to such information. This result has an important practical design guideline on storage allocation in interference channels: \emph{push the storage towards end-users}.

Other closely related works to our study are \cite{Shariatpanahi14}, \cite{Maddah-Ali15}, \cite{Zhang15}, and \cite{Wigger16}. The papers \cite{Shariatpanahi14} and \cite{Maddah-Ali15} also consider cache-enabled interference channels, with the difference that they assume memory only at the transmit side. In \cite{Zhang15}, Zhang et al. consider the effect of delayed CSIT in a MISO boradcast channel, which is equivalent to having unlimited memory at the transmit side, similar to \cite{Shariatpanahi16}. Here, in contrast, we consider limited transmit-side memory. Finally, \cite{Wigger16} considers a Wyner model, and that the transmitters are connected via a Cloud RAN, which makes their model different from ours.

The rest of the paper is organized as follows. In Section \ref{sec:System_Model}, we present the system model. In Section \ref{sec:PerfectCSIT}, DoF of a cache-enabled interference channel with full CSIT is derived. In Section \ref{sec:DelayedCSIT}, we introduce DoF of a cache-enabled interference channel with mixed CSIT. Finally, Section \ref{sec:Num} presents numerical illustrations, and concludes the paper.

\noindent {\bf Notation:} $(\cdot)^T$ represents transpose operation. $|\cdot|$ denotes the cardinality of a set, and $[K]\triangleq\{1,2,...,K\}$, for any $K\in\mathbb{N}^+$.
%%%%%%%%%%%%%%%%%%%%%%%%%%%%%%

%%%%%%%%%%%%%%%%%%%%%%%%%%%%%%
%System Model
%%%%%%%%%%%%%%%%%%%%%%%%%%%%%%
\section{System Model}
\label{sec:System_Model}
\sloppy
We consider a block time-varying memoryless wireless interference channel through which $K_t$ transmitters are connected to $K_r$ receivers. Such channel, at each time instant $\tau$, can be modeled as a $K_r$-by-$K_t$ matrix which is denoted by ${\bf H}(\tau)=\big[h_{i,j}(\tau)\big] \in \mathbb{C}^{K_r\times K_t}$, whose entry in the $i$th row and the $j$th column represents the channel gain from transmitter $j$ to the $i$th receiver, and is an independent complex Gaussian random variable ${\cal CN}(0,1)$. This matrix relates the channel's inputs to the outputs by following equation:
\begin{equation}
{\bf y}(\tau) = {\bf H}(\tau) {\bf x}(\tau) + {\bf z}(\tau),
\end{equation}
in which, ${\bf x}(\tau) = \big(x_1(\tau),...,x_{K_t}(\tau)\big)^T$ is the vector consisting of all the transmitted messages, ${\bf y}(\tau) = \big(y_1(\tau),...,y_{K_r}(\tau)\big)^T$ represents the output vector which includes the received signals, and ${\bf z}(\tau) = \big(z_1(\tau),...,z_{K_r}(\tau)\big)^T \in \mathbb{C}^{K_r\times 1}$ denotes the additive white Gaussian noise (AWGN) with the covariance matrix $\sigma^2{\bf I}_{K_r}$.

Then, we define the channel gain vector to the $k$th receiver as follows:
\begin{equation}
{\bf h}_k(\tau) \triangleq 
\big(h_{k,1}(\tau),...,h_{k,K_t}(\tau)\big)^T
, \quad k=1,2,...,K_r,
\end{equation}
which is the transpose of the $k$th row of matrix ${\bf H}(\tau)$, so the received signal of receiver $k$ based on the channel's input vector is as follows:
\begin{equation}
y_k(\tau) = {\bf h}_k(\tau)^T {\bf x}(\tau) + z_k(\tau).
\end{equation}

We divide the time domain into time slots of equal length, $T_c$, such that the channel state during each time slot is fixed, while it is completely independent for different time slots. In other words, the channel transfer matrix elements do not alter in the middle of any time slot, so a discrete-time process, i.e. ${\bf H}[m]={\bf H}(\tau|(m-1)T_c\le \tau < mT_c)$,  represents channel transfer matrix for all time slots $m=1,2,\dots$. 

In this paper, we assume that the CSI is measured at the receivers from training symbols sent by the transmitters. Then, via a feedback channel, this information is fed back to the transmitters. We assume this training/feedback procedure takes the time of length $T_f$. Therefore, in the first $T_f$ seconds of each time coherence block of $T_c$ seconds, the transmitters do not have any instantaneous CSI, while in the rest of the coherence time they possess full CSI, as shown in Fig. \ref{fig:1}. We define $\alpha\triangleq T_f/T_c$, as a measure for channel training efficiency, with $\alpha=0$ denoting full CSIT, and $\alpha=1$ representing delayed CSIT. This model is called \emph{mixed CSIT} in the literature \cite{Gou12}.

%%%%%%%%%%%%%%%%%%%%%%%%%%%%%%
%Fig. 1
%%%%%%%%%%%%%%%%%%%%%%%%%%%%%%
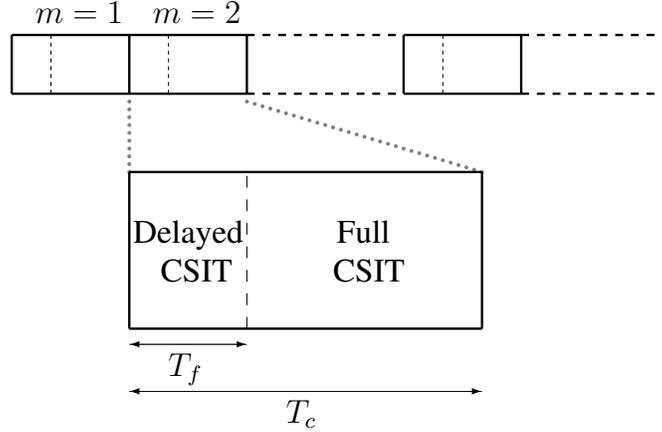
\begin{figure}
\centering
\resizebox{0.6\textwidth}{!}{
\setlength{\unitlength}{1cm}
\begin{picture}(10,5.5)(-0.7,-4.2)
\linethickness{0.3mm}

\multiput(0,0)(1.5,0){3}
{\line(0,1){0.75}}
\multiput(5,0)(1.5,0){2}
{\line(0,1){0.75}}

\multiput(0,0)(0,0.75){2}
{\line(1,0){3}}
\multiput(5,0)(0,0.75){2}
{\line(1,0){1.5}}

\multiput(3,0)(0.2,0){10}
{\line(1,0){0.1}}
\multiput(3,0.75)(0.2,0){10}
{\line(1,0){0.1}}

\multiput(6.5,0)(0.2,0){10}
{\line(1,0){0.1}}
\multiput(6.5,0.75)(0.2,0){10}
{\line(1,0){0.1}}

\linethickness{0.1mm}

\multiput(0.5,0)(0,0.1){8}
{\line(0,1){0.05}}
\multiput(2,0)(0,0.1){8}
{\line(0,1){0.05}}
\multiput(5.5,0)(0,0.1){8}
{\line(0,1){0.05}}

\put(0.3,0.9){$m=1$}
\put(1.8,0.9){$m=2$}

\multiput(1.5,-1)(0,0.1){10}{\color{gray}\circle*{0.05}}
\multiput(6,-1)(-0.1,0.03){31}{\color{gray}\circle*{0.05}}

\linethickness{0.3mm}

\multiput(1.5,-3)(4.5,0){2}{\line(0,1){2}}
\multiput(1.5,-3)(0,2){2}{\line(1,0){4.5}}

\linethickness{0.1mm}

\multiput(3,-3)(0,0.3){7}{\line(0,1){0.15}}
\put(1.5,-3.2){\vector(1,0){1.5}}
\put(3,-3.2){\vector(-1,0){1.5}}
\put(2,-3.6){$T_f$}
\put(1.5,-3.8){\vector(1,0){4.5}}
\put(6,-3.8){\vector(-1,0){4.5}}
\put(3.5,-4.2){$T_c$}

\put(1.55,-1.9){Delayed}
\put(1.9,-2.4){CSIT}

\put(4.15,-1.9){Full}
\put(4.1,-2.4){CSIT}

\end{picture}
}
\caption{Mixed CSIT availability in the block fading model.}
\label{fig:1}
\end{figure}
%%%%%%%%%%%%%%%%%%%%%%%%%%%%%%

We consider a library of $N$ files, ${\cal W}=\{W_1,...,W_N\}$, of size $F$ bits each. We also assume that each transmitter has a cache of size $M_tF$ bits, and each receiver is equipped with a cache of size $M_rF$ bits. For later use, we define parameters $t_t \triangleq \frac{K_tM_t}{N}$ and $t_r \triangleq \frac{K_rM_r}{N}$. We, further, suppose that aggregate cache size of the transmitters is greater than the total size of all the $N$ files in the library, so the condition $t_t \ge 1$ should be satisfied.

The network operates in two phases, namely \emph{cache content placement} and \emph{content delivery}. In the former, the cache of transmitters and receivers are filled with data from the library, without prior knowledge of receivers' actual requests. In the latter, each receiver requests a single file from the library, where file $W_{d_k}$ is demanded by $k$th receiver. Then, accordingly, the transmitters cooperate to send proper data such that each receiver is able to retrieve its demanded file using the received data and the content stored in its cache.

In this paper, as the performance metric, we use  Degrees of Freedom (DoF) as a measure of the number of non-interfering parallel paths which can be used to deliver different data streams simultaneously (or equivalently in our case the number of users being served at the same time), defined as follows.

\begin{defi}
	For an interference channel consisting of $K_t$ transmitters, each with a cache of size $M_tF$ bits, and $K_r$ receivers, each with a cache of size $M_rF$ bits, the optimum DoF is defined as follows:
	\begin{equation}
	d^*(t_t,t_r) = \lim_{P\rightarrow \infty} \inf \frac{C_\Sigma(t_t,t_r,P)}{\log P},
	\end{equation}
	where $C_\Sigma(t_t,t_r,P)$ is the supremum of all achievable sum-rates received by all the receivers for normalized cache sizes of $t_t$ and $t_r$ on transmit and receive sides respectively, as well as a power constraint of $P$ held for transmitter antennas.
	
	According to this definition, the value of DoF cannot be greater than the number of all receivers in the system ($K_r$).
\end{defi}
%%%%%%%%%%%%%%%%%%%%%%%%%%%%%%

%%%%%%%%%%%%%%%%%%%%%%%%%%%%%%
%Cache-enabled interference channel with perfect instantaneous CSIT
%%%%%%%%%%%%%%%%%%%%%%%%%%%%%%
\section{Cache-enabled interference channel with perfect instantaneous CSIT}
\label{sec:PerfectCSIT}

%%%%%%%%%%%%%%%%%%%%%%%%%%%%%%
%Fig. 2
%%%%%%%%%%%%%%%%%%%%%%%%%%%%%%
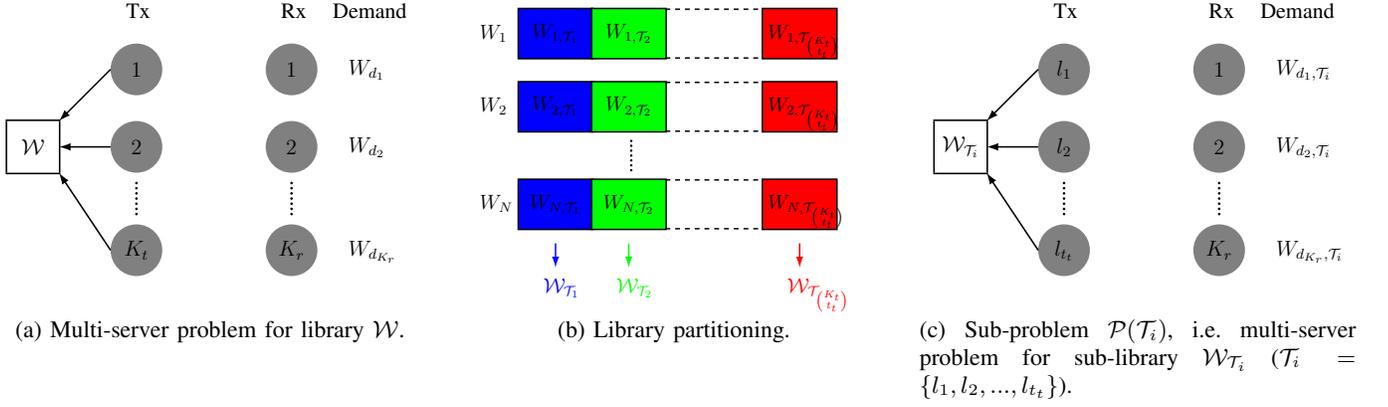
\begin{figure*}[t]
%%%%%%%%%%%%%%%%%%%%%%%%%%%%%%
%Fig. 2-a
\begin{subfigure}[t]{.32\textwidth}
\resizebox{1\textwidth}{!}{
\setlength{\unitlength}{1cm}
\begin{picture}(8,6)(0,-4)
\setlength\fboxsep{0pt}
\linethickness{0.3mm}
\put(2.5,0.5){\color{gray}\circle*{1}}
\put(2,0){{\makebox(1,1){$1$}}}
\put(2.5,-1){\color{gray}\circle*{1}}
\put(2,-1.5){{\makebox(1,1){$2$}}}
\multiput(2.5,-2.3)(0,0.1){7}{\circle*{0.05}}
\put(2.5,-3){\color{gray}\circle*{1}}
\put(2,-3.5){{\makebox(1,1){$K_t$}}}

\put(5.5,0.5){\color{gray}\circle*{1}}
\put(5,0){{\makebox(1,1){$1$}}}
\put(5.5,-1){\color{gray}\circle*{1}}
\put(5,-1.5){{\makebox(1,1){$2$}}}
\multiput(5.5,-2.3)(0,0.1){7}{\circle*{0.05}}
\put(5.5,-3){\color{gray}\circle*{1}}
\put(5,-3.5){{\makebox(1,1){$K_r$}}}

\put(0,-1.5){\framebox(1,1){${\cal W}$}}

\put(2,0.5){\vector(-1,-1){1}}
\put(2,-1){\vector(-1,0){1}}
\put(2,-3){\vector(-2,3){1}}

\put(2.3,1.5){Tx}
\put(5.3,1.5){Rx}
\put(6.3,1.5){Demand}
\put(6.6,0.4){$W_{d_1}$}
\put(6.6,-1.1){$W_{d_2}$}
\put(6.6,-3.1){$W_{d_{K_r}}$}
\end{picture}
}
\caption{Multi-server problem for library $\mathcal{W}$.}
\label{fig:2a}
\end{subfigure}
%%%%%%%%%%%%%%%%%%%%%%%%%%%%%%
~
%%%%%%%%%%%%%%%%%%%%%%%%%%%%%%
%Fig. 2-b
%%%%%%%%%%%%%%%%%%%%%%%%%%%%%%
\begin{subfigure}[t]{.32\textwidth}
\resizebox{1\textwidth}{!}{
\setlength{\unitlength}{1cm}
\begin{picture}(8.5,6)(-1,-5)
\setlength\fboxsep{0pt}
\linethickness{0.3mm}
\put(-0.8,0.4){$W_1$}
\put(0,0){\colorbox{blue}{\framebox(1.5,1){$W_{1,{{\cal T}_1}}$}}}
\put(1.5,0){\colorbox{green}{\framebox(1.5,1){$W_{1,{{\cal T}_2}}$}}}
\put(5,0){\colorbox{red}{\framebox(1.5,1){}}}
\put(5.1,0.4){$W_{1,{{\cal T}_{K_t \choose t_t}}}$}
\multiput(3,0)(0.2,0){10}{\line(1,0){0.1}}
\multiput(3,1)(0.2,0){10}{\line(1,0){0.1}}

\put(-0.8,-1.1){$W_2$}
\put(0,-1.5){\colorbox{blue}{\framebox(1.5,1){$W_{2,{{\cal T}_1}}$}}}
\put(1.5,-1.5){\colorbox{green}{\framebox(1.5,1){$W_{2,{{\cal T}_2}}$}}}
\put(5,-1.5){\colorbox{red}{\framebox(1.5,1){}}}
\put(5.1,-1.1){$W_{2,{{\cal T}_{K_t \choose t_t}}}$}
\multiput(3,-1.5)(0.2,0){10}{\line(1,0){0.1}}
\multiput(3,-0.5)(0.2,0){10}{\line(1,0){0.1}}

\multiput(2.3,-2.3)(0,0.1){7}{\circle*{0.05}}

\put(-0.8,-3.1){$W_N$}
\put(0,-3.5){\colorbox{blue}{\framebox(1.5,1){$W_{N,{{\cal T}_1}}$}}}
\put(1.5,-3.5){\colorbox{green}{\framebox(1.5,1){$W_{N,{{\cal T}_2}}$}}}
\put(5,-3.5){\colorbox{red}{\framebox(1.5,1){}}}
\put(5.1,-3.1){$W_{N,{{\cal T}_{K_t \choose t_t}}}$}
\multiput(3,-3.5)(0.2,0){10}{\line(1,0){0.1}}
\multiput(3,-2.5)(0.2,0){10}{\line(1,0){0.1}}

\put(0.75,-3.8){\color{blue}\vector(0,-1){0.5}}
\put(0.5,-4.8){\color{blue}${\cal W}_{{\cal T}_1}$}

\put(2.25,-3.8){\color{green}\vector(0,-1){0.5}}
\put(2,-4.8){\color{green}${\cal W}_{{\cal T}_2}$}

\put(5.75,-3.8){\color{red}\vector(0,-1){0.5}}
\put(5.5,-4.8){\color{red}${\cal W}_{{\cal T}_{{K_t \choose t_t}}}$}
\end{picture}
}
\caption{Library partitioning.}
\label{fig:2b}
\end{subfigure}
%%%%%%%%%%%%%%%%%%%%%%%%%%%%%%
~
%%%%%%%%%%%%%%%%%%%%%%%%%%%%%%
%Fig. 2-c
%%%%%%%%%%%%%%%%%%%%%%%%%%%%%%
\begin{subfigure}[t]{.32\textwidth}
\resizebox{1\textwidth}{!}{
\setlength{\unitlength}{1cm}
\begin{picture}(8,6)(0,-4)
\setlength\fboxsep{0pt}
\linethickness{0.3mm}
\put(2.5,0.5){\color{gray}\circle*{1}}
\put(2,0){{\makebox(1,1){$l_1$}}}
\put(2.5,-1){\color{gray}\circle*{1}}
\put(2,-1.5){{\makebox(1,1){$l_2$}}}
\multiput(2.5,-2.3)(0,0.1){7}{\circle*{0.05}}
\put(2.5,-3){\color{gray}\circle*{1}}
\put(2,-3.5){{\makebox(1,1){$l_{t_t}$}}}

\put(5.5,0.5){\color{gray}\circle*{1}}
\put(5,0){{\makebox(1,1){$1$}}}
\put(5.5,-1){\color{gray}\circle*{1}}
\put(5,-1.5){{\makebox(1,1){$2$}}}
\multiput(5.5,-2.3)(0,0.1){7}{\circle*{0.05}}
\put(5.5,-3){\color{gray}\circle*{1}}
\put(5,-3.5){{\makebox(1,1){$K_r$}}}

\put(0,-1.5){\framebox(1,1){${\cal W}_{{\cal T}_i}$}}

\put(2,0.5){\vector(-1,-1){1}}
\put(2,-1){\vector(-1,0){1}}
\put(2,-3){\vector(-2,3){1}}

\put(2.3,1.5){Tx}
\put(5.3,1.5){Rx}
\put(6.3,1.5){Demand}
\put(6.6,0.4){$W_{d_1,{\cal T}_i}$}
\put(6.6,-1.1){$W_{d_2,{\cal T}_i}$}
\put(6.6,-3.1){$W_{d_{K_r},{\cal T}_i}$}
\end{picture}
}
\caption{Sub-problem ${\cal P}({\cal T}_i)$, i.e. multi-server problem for sub-library $\mathcal{W}_{\mathcal{T}_i}$ (${\cal T}_i=\{l_1,l_2,...,l_{t_t}\}$).}
\label{fig:2c}
\end{subfigure}
%%%%%%%%%%%%%%%%%%%%%%%%%%%%%%
\caption{The general $M_t$ case can be reduced to ${K_t \choose t_t}$ multi-server sub-problems with smaller sub-libraries.}
\label{fig:2}
\end{figure*}
%%%%%%%%%%%%%%%%%%%%%%%%%%%%%%

In this section, we derive DoF when full CSIT is available. First, we consider the special case of $M_t=N$, which is stated in Lemma \ref{lem:1}. Then, the main result of this section for general $M_t$ is presented in Theorem \ref{thm:0}. 

\begin{lem}[Multi-server problem ($M_t=N$), \cite{Shariatpanahi16}]
	\label{lem:1}
	Suppose an interference channel with a transfer matrix of i.i.d. elements, and a library of $N$ files each of size $F$ bits. If there are $K_t$ transmitters each with a cache of size $NF$ bits, and $K_r$ receivers each with a cache of size $M_rF$ bits, the following DoFs (and their lower convex envelope) are achievable:
	\begin{equation}
	d(K_t,t_r) = \min(K_t+t_r,K_r),
	\end{equation}
	where $t_r=\frac{K_rM_r}{N} \in \{0,1,...,K_r\}$ should be held true.
\end{lem}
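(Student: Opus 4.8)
The plan is to prove achievability by stacking two gains that are, in this full-CSIT regime, essentially non-interfering: the coded-multicasting gain $t_r$ coming from Maddah-Ali--Niesen placement at the receivers, and the spatial-multiplexing gain $K_t$ coming from zero-forcing across the transmit antennas; their sum is then capped by $K_r$, the hard upper bound on DoF already noted after the Definition. Since $M_t=N$, every transmitter holds the whole library, so the $K_t$ transmitters may be regarded as one transmitter with $K_t$ distributed antennas and complete message knowledge; with i.i.d.\ channel coefficients and instantaneous CSIT this is, for DoF purposes, a cache-aided $K_t\times K_r$ MISO broadcast channel, and the statement reduces to a known achievable scheme for it (that of \cite{Shariatpanahi16}).

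For the placement phase I would use the standard coded placement: partition each file $W_n$ into $\binom{K_r}{t_r}$ equal subfiles $W_{n,\mathcal{S}}$ indexed by the $t_r$-subsets $\mathcal{S}\subseteq[K_r]$, and let receiver $k$ store $W_{n,\mathcal{S}}$ for every $n$ and every $\mathcal{S}$ with $k\in\mathcal{S}$. Each receiver then caches $N\binom{K_r-1}{t_r-1}$ subfiles, i.e.\ $Nt_r/K_r=M_r$ files' worth, so the receiver cache constraint is met with equality; the transmitters require no placement design.

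The delivery phase is the core. Once the demand vector $(d_1,\dots,d_{K_r})$ is revealed, I would serve the receivers in rounds, each round handling a group $\mathcal{G}$ of $g\triangleq\min(K_t+t_r,K_r)$ receivers. For such a group the transmitted signal is a superposition of beamformed symbols: the symbols are XORs $\bigoplus_{k\in\mathcal{S}}W_{d_k,\mathcal{S}\setminus\{k\}}$ taken over the $(t_r+1)$-subsets $\mathcal{S}\subseteq\mathcal{G}$ (so that every receiver $k\in\mathcal{S}$ can cancel the $t_r$ undesired terms using its cache, which contains all subfiles whose index set contains $k$), and each such XOR is assigned, using full CSIT, a beamforming vector in $\mathbb{C}^{K_t}$ orthogonal to the channels of the $K_t-1$ members of $\mathcal{G}$ outside $\mathcal{S}$ --- a direction that exists and is generically non-degenerate because the channel submatrices have full rank almost surely. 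Thus within each group all $t_r+K_t$ (or $K_r$) receivers are separated into parallel streams, and a bookkeeping of subfile sizes against channel uses gives per-round DoF equal to $g$; letting the rounds range over a family of groups that jointly covers all demanded subfiles delivers every file, so $d(K_t,t_r)=g$ is achievable, and the lower convex envelope follows by memory-/time-sharing between consecutive integer values of $t_r$.

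I expect the main obstacle to be the joint combinatorial-algebraic design inside a round: one must pin down exactly which XOR symbols are transmitted, assign each a beamforming direction, and verify simultaneously that (i) every subfile still missing at some receiver of the group is delivered, (ii) the $K_t-1$ nulling constraints per symbol are feasible with only $K_t$ spatial dimensions, and (iii) each intended receiver can cache-cancel precisely the undesired terms, with the count of delivered bits versus channel uses closing to give exactly $K_t+t_r$ and not less. Secondary technical points are the saturated regime $K_t+t_r>K_r$ (where part of the spatial/cache resource is redundant and the answer collapses to $K_r$) and the almost-sure genericity arguments underlying the zero-forcing. As this is exactly the construction of \cite{Shariatpanahi16}, I would defer the detailed scheme, and the proof that this convex envelope is the right operating curve, to that reference.
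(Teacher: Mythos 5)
Your proposal is correct and follows essentially the same route as the paper, which itself proves Lemma~\ref{lem:1} only by citation to \cite{Shariatpanahi16}: Maddah-Ali--Niesen placement at the receivers combined with zero-forcing of each $(t_r{+}1)$-order coded message at the remaining $K_t-1$ receivers of a group of size $K_t+t_r$. The one detail your sketch elides --- that a single XOR per $(t_r{+}1)$-subset must be replaced by $\binom{K_t+t_r-1}{t_r}$ independent random linear combinations so that each receiver collects enough equations --- is exactly the bookkeeping you flag and defer, and it is spelled out in the paper's appendix for the general-$t_t$ scheme of which this lemma is the $t_t=K_t$ special case.
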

\begin{proof}
	The proof is given in \cite{Shariatpanahi16}.
\end{proof}

The main point of Lemma \ref{lem:1} is that each transmitter has access to the full library as depicted in Fig. \ref{fig:2a}. Next, in Theorem \ref{thm:0}, we generalize Lemma \ref{lem:1} to the case where transmitters have limited memories, such that no transmitter is able to cache the entire library.

\begin{thm}
	\label{thm:0}
	Suppose an interference channel with a transfer matrix of i.i.d. elements, and a library of $N$ files each of size $F$ bits. If there are $K_t$ transmitters each with a cache of size $M_tF$ bits,  and $K_r$ receivers each with a cache of size $M_rF$ bits, the following DoFs (and their lower convex envelope) are achievable:
	\begin{equation}
	d_{\text{C}}(t_t,t_r) = \min(t_t+t_r,K_r),
	\end{equation}
	where $t_t=\frac{K_tM_t}{N} \in \{1,2,...,K_t\}$ and $t_r=\frac{K_rM_r}{N} \in \{0,1,...,K_r\}$ should be held true.
\end{thm}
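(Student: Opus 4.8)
The plan is to reduce the general $M_t$ case to $\binom{K_t}{t_t}$ parallel instances of the multi-server problem of Lemma~\ref{lem:1}, one for each size-$t_t$ subset of transmitters, as illustrated in Fig.~\ref{fig:2}, and then to combine them by time-sharing.

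\emph{Cache placement.} Split every file $W_n$ into $\binom{K_t}{t_t}$ equal subfiles $W_{n,{\cal T}}$ indexed by the subsets ${\cal T}\subseteq[K_t]$ with $|{\cal T}|=t_t$, each of size $F/\binom{K_t}{t_t}$ bits. Transmitter $j$ caches $W_{n,{\cal T}}$ for all $n\in[N]$ and all ${\cal T}$ with $j\in{\cal T}$; since $\binom{K_t-1}{t_t-1}$ subsets of size $t_t$ contain a fixed index, transmitter $j$ stores $N\binom{K_t-1}{t_t-1}F/\binom{K_t}{t_t}=(Nt_t/K_t)F=M_tF$ bits, exactly meeting its constraint. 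Each receiver partitions its $M_rF$-bit cache into $\binom{K_t}{t_t}$ equal parts, assigning the part of size $M_rF/\binom{K_t}{t_t}$ to the sub-library ${\cal W}_{\cal T}=\{W_{1,{\cal T}},\dots,W_{N,{\cal T}}\}$; relative to that sub-library, whose $N$ subfiles have $F/\binom{K_t}{t_t}$ bits each, the receiver stores the equivalent of $M_r$ subfiles, so the effective cache parameter equals $K_rM_r/N=t_r$. Within each part, the bits are placed exactly as prescribed by the achievability scheme of Lemma~\ref{lem:1} for the sub-problem below.

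\emph{Delivery.} Given the demand vector $(d_1,\dots,d_{K_r})$, serve the $\binom{K_t}{t_t}$ sub-libraries in separate (TDMA) time intervals. In the interval associated with ${\cal T}$, only the $t_t$ transmitters in ${\cal T}$ transmit; each of them holds the entire sub-library ${\cal W}_{\cal T}$, each receiver $k$ still needs $W_{d_k,{\cal T}}$, and the induced $K_r\times t_t$ channel submatrix still has i.i.d. entries. This is exactly the setting of Lemma~\ref{lem:1} with $t_t$ servers, $K_r$ receivers, a library of $N$ files of size $F/\binom{K_t}{t_t}$ bits, and normalized receiver cache $t_r$ (Fig.~\ref{fig:2c}), so within this interval a sum-DoF of $\min(t_t+t_r,K_r)$ is achievable; delivering $W_{d_k,{\cal T}}$ to all $k$ therefore takes about $K_r(F/\binom{K_t}{t_t})/(\min(t_t+t_r,K_r)\log P)$ channel uses. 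Adding the $\binom{K_t}{t_t}$ intervals gives a total of $K_rF/(\min(t_t+t_r,K_r)\log P)$ channel uses, while every receiver recovers its full $F$-bit file; hence the overall sum-rate is $\min(t_t+t_r,K_r)\log P$ and the achieved DoF is $d_{\text{C}}(t_t,t_r)=\min(t_t+t_r,K_r)$. Memory-sharing among placements then yields every point on the lower convex envelope of these integer corner pairs $(t_t,t_r)$.

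\emph{Main obstacle.} No deep difficulty is expected; the steps requiring care are the bookkeeping: verifying via $\binom{K_t-1}{t_t-1}/\binom{K_t}{t_t}=t_t/K_t$ that the transmitter placement exactly fills each cache, checking that the per-subset receiver budget makes every sub-problem inherit the same parameter $t_r$, and confirming that the TDMA schedule incurs no DoF loss because the $\binom{K_t}{t_t}$ sub-problems occupy disjoint time slots and therefore do not interfere. Everything else is a direct invocation of Lemma~\ref{lem:1}.
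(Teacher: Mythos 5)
Your proposal is correct and follows essentially the same route as the paper: partition each file by $t_t$-subsets of transmitters, reduce to $\binom{K_t}{t_t}$ multi-server sub-problems each solved by Lemma~\ref{lem:1} with $t_t$ servers and effective receiver parameter $t_r$, and serve them in sequence so the DoF is preserved. Your explicit verification that the transmitter and receiver cache budgets are exactly met is a welcome addition of bookkeeping the paper leaves implicit.
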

\begin{proof}
	First, we divide each file into ${K_t \choose t_t }{K_r \choose t_r}$ equal-sized non-overlapping sub-files:
	\begin{equation}
	W_n = (W_{n,{\cal T},{\cal R}}:{\cal T} \subseteq [K_t],|{\cal T}| = t_t,{\cal R} \subseteq [K_r] , |{\cal R}| = t_r ).
	\end{equation}
	Then, throughout the cache content placement phase, receiver $k$ caches   sub-file $W_{n,{\cal T},{\cal R}}$ if $k \in {\cal R}$, and transmitter $l$ caches  sub-file $W_{n,{\cal T},{\cal R}}$ if $l \in {\cal T}$, as follows:
	\begin{eqnarray} \nonumber
	Z_{t_l} &=& (W_{n,{\cal T},{\cal R}}:l \in {\cal T}), \\ 
	Z_{r_k} &=& (W_{n,{\cal T},{\cal R}}:k \in {\cal R}),
	\end{eqnarray}
	where $Z_{t_l}$ and $Z_{t_r}$ are referred to as the content of transmitter $l$'s cache and the receiver $k$'s cache, respectively.
	
	It can be seen that 
	$W_{n,{\cal T}}=(W_{n,{\cal T},{\cal R}}:{\cal R} \subseteq [K_r] , |{\cal R}| = t_r)$
	is present in the caches of exactly $t_t$ transmitters, i.e. the ones which belong to ${\cal T}$. So, we can divide the primary library $\mathcal{W}$ into ${K_t \choose t_t }$ sub-libraries as follows:
	\begin{equation}
	{\cal W}_{\cal T} = \{W_{1,{\cal T}},...,W_{N,{\cal T}}\}, \quad {\cal T}\subseteq [K_t],|{\cal T}|=t_t,
	\end{equation}
	each of which is completely cached in precisely $t_t$ transmitters, and also is comprised of $N$ smaller files each of size $F'=\frac{F}{{K_t \choose t_t}}$ bits (see Fig. \ref{fig:2b}).
	
	Let us consider one of these sub-libraries, say $\mathcal{W}_{\mathcal{T}_i}$. We define sub-problem ${\cal P}(\mathcal{T}_i)$ to be the problem of delivering $(W_{d_k,\mathcal{T}_i})_{k=1}^{K_r}$ by $t_t$ transmitters (members of $\mathcal{T}_i$) all having access to the whole sub-library $\mathcal{W}_{\mathcal{T}_i}$, as depicted in Fig. \ref{fig:2c}. Consequently, if the necessary transmissions of all sub-problems $\big({\cal P}(\mathcal{T}_i)\big)_{i=1}^{{K_t \choose t_t}}$ are fulfilled in sequence, each receiver $k$ will be able to decode $W_{d_k}$. This reduces the original problem to ${K_t \choose t_t}$ multi-server problems with smaller sub-libraries.
	
	 Then, the result of Lemma \ref{lem:1} can be applied with $t_t$ transmitters to each sub-problem, which states that, for each $\mathcal{T}_i$, the delivery of $(W_{d_k,\mathcal{T}_i})_{k=1}^{K_r}$ can be accomplished with a DoF of
	 
	\begin{equation}
	d(t_t,t_r) = \min(t_t+t_r,K_r).
	\end{equation}  
	 Since sub-problems are handled in sequence, the same DoF is achievable for the whole library, and this concludes the proof.
\end{proof}

It should be noted that a similar result, for the full CSIT case, has also been obtained independently in \cite{Naderalizadeh16}, via a different proof approach without relying on Lemma \ref{lem:1}. Since, in the next section, we need more details of the delivery scheme, we have provided the delivery details in the \nameref{app}.

%%%%%%%%%%%%%%%%%%%%%%%%%%%%%%
%Cache-enabled interference channel with mixed CSIT
%%%%%%%%%%%%%%%%%%%%%%%%%%%%%%
\section{Cache-enabled interference channel with mixed CSIT}
\label{sec:DelayedCSIT}

The interesting result of \cite{dscit-maddah} shows that  delayed CSIT can still be of much benefit in order to achieve greater DoF than the one without CSIT. Considering a MISO broadcast channel in which transmitters access channel coefficients of all preceding time slots excluding the present time slot, the authors in \cite{dscit-maddah} propose a novel method which employs such past knowledge to achieve an optimum DoF of the channel. We will state the main theorem of the aforementioned work in the following lemma.
\begin{lem}[\cite{dscit-maddah}]
	\label{lem:2}
	For a MISO broadcast channel with $L$ transmit antennas, $K$ receivers, and one-time-slot-delayed CSIT, for $L\ge K-j+1$ we have:
	\begin{equation}
	d_j^*(L,K) \ge \frac{K-j+1}{\frac{1}{j}+\frac{1}{j+1}+...+\frac{1}{K}},
	\end{equation}
	where $d_j^*(L,K)$ denotes the maximum DoF to transfer messages of order $j$ (messages which are intended for exactly $j$ receivers).
\end{lem}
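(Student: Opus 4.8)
The plan is to establish this as an achievability statement only, by reproducing the \emph{retrospective interference alignment} scheme of \cite{dscit-maddah}, but launched from order-$j$ demands rather than from private messages. Transmission runs in the $K-j+1$ phases $\ell=j,j+1,\dots,K$, where phase $\ell$ is devoted to order-$\ell$ symbols (scalars wanted by exactly $\ell$ of the receivers). The phase-$\ell$ building block is the following: fix an $\ell$-subset $S$; in one channel use the transmitter sends an $L$-vector carrying $K-\ell+1$ \emph{fresh} order-$\ell$ symbols addressed to $S$. Because the CSIT is one slot stale the transmitter cannot zero-force, so each of the $\ell$ receivers in $S$ picks up exactly one linear equation in these $K-\ell+1$ unknowns, while each of the remaining $K-\ell$ receivers overhears one additional linear combination of the same unknowns. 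This step needs $K-\ell+1$ transmit dimensions, and since $\ell$ ranges down to $j$ the binding constraint is exactly $L\ge K-j+1$, the hypothesis of the lemma; when $\ell=K$ the phase degenerates to broadcasting one common symbol that every receiver decodes.

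The delayed CSIT is used in the retrospective step. After phase $\ell$ the transmitter knows all the overheard combinations. An equation that some receiver $p\notin S$ overheard about $S$'s fresh block is by construction already available to $p$ and is still \emph{useful} to every receiver in $S$ (each of them is short by one equation per overhearing receiver, i.e.\ by $K-\ell$ equations in total). Combining such overheard equations across the $(\ell+1)$-subsets turns them into a fresh batch of genuine order-$(\ell+1)$ symbols, which become the input to phase $\ell+1$; iterating up to phase $K$ eventually supplies every receiver of an original order-$j$ symbol with the $K-\ell+1$ equations it needs in its $K-\ell+1$ unknowns (one received directly, $K-\ell$ recovered in later phases). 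Since the channel matrix has i.i.d.\ entries, the accumulated system of equations is almost surely full rank, so decoding succeeds and the claimed DoF is attained in the $P\to\infty$ limit.

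What remains is the bookkeeping. Let $d_\ell$ denote the order-$\ell$ sum-DoF achieved by phases $\ell,\dots,K$, and normalize the per-phase block sizes so that everything is divisible. Per channel use (per served subset) phase $\ell$ injects $K-\ell+1$ symbols, each of which is credited $\ell$ times in the sum-DoF, so its direct cost is $\tfrac{1}{\ell(K-\ell+1)}$ channel uses per DoF unit, while a fraction $\tfrac{K-\ell}{K-\ell+1}$ of the work is deferred to phase $\ell+1$; this yields the recursion
\begin{equation}
\frac{1}{d_\ell}=\frac{1}{\ell(K-\ell+1)}+\frac{K-\ell}{K-\ell+1}\cdot\frac{1}{d_{\ell+1}},\qquad \ell=j,\dots,K,
\end{equation}
the last term being absent at $\ell=K$ (so $d_K=K$). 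Setting $f_\ell\triangleq(K-\ell+1)/d_\ell$ collapses the recursion to $f_\ell=\tfrac1\ell+f_{\ell+1}$, hence $f_\ell=\sum_{i=\ell}^{K}\tfrac1i$ and $d_j=(K-j+1)/\sum_{i=j}^{K}\tfrac1i$, which is the claimed bound. The genuinely delicate point — and the technical core of \cite{dscit-maddah} — is the combinatorial design of the retrospective step: verifying that the overheard combinations produced by one phase bundle into exactly the right number of \emph{decodable} order-$(\ell+1)$ symbols, with block sizes consistent across all subsets. Everything else is routine accounting together with the standard Gaussian-channel-to-DoF reduction.
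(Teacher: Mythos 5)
The paper offers no proof of Lemma~2 at all --- it is imported verbatim from Maddah-Ali and Tse \cite{dscit-maddah} --- and your sketch is a faithful reconstruction of exactly that retrospective interference-alignment scheme, with the recursion correctly normalized to the sum-DoF convention of the lemma (each order-$j$ symbol credited $j$ times, so that $d_K=K$ and $d_j=(K-j+1)/\sum_{i=j}^{K}\tfrac{1}{i}$), and with the binding antenna constraint $L\ge K-j+1$ identified at the right phase. The one step you defer to \cite{dscit-maddah} --- bundling the $\ell+1$ families of overheard equations for each $(\ell+1)$-subset into $\ell$ jointly decodable order-$(\ell+1)$ symbols --- is indeed the technical core there, and your ``fraction $\tfrac{K-\ell}{K-\ell+1}$ of the work is deferred'' accounting already implicitly absorbs its $\tfrac{\ell}{\ell+1}$ compression factor, so the bookkeeping is consistent; since the paper itself proves nothing here, your writeup is strictly more informative than the source text.
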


Based on Lemma \ref{lem:2}, we can state the following result for the delayed CSIT cache-enabled interference channel.

\begin{corol}\label{corol:1}
\emph{In a cache-enabled $K_t \times K_r$ interference channel with delayed CSIT the following DoFs (and their lower convex envelope) are achievable:
	\begin{equation}
	d_{\text{D}}(t_t,t_r) = \frac{t_t}{\frac{1}{t_r+1}+\frac{1}{t_r+2}+...+\frac{1}{t_r+t_t}}.
	\end{equation}
	where $t_t=\frac{K_tM_t}{N} \in \{1,2,...,K_t\}$ and $t_r=\frac{K_rM_r}{N} \in \{0,1,...,K_r\}$, and $t_t+t_r \le K_r$ should be held true.}
\end{corol}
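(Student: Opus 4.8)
The plan is to mirror the reduction used in the proof of Theorem~\ref{thm:0}, keeping exactly the same cache placement, and to replace the full-CSIT multi-server primitive (Lemma~\ref{lem:1}) by the delayed-CSIT MISO primitive (Lemma~\ref{lem:2}) in the innermost delivery step.

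First I would reuse the placement of Theorem~\ref{thm:0} verbatim: split each $W_n$ into $\binom{K_t}{t_t}\binom{K_r}{t_r}$ equal sub-files $W_{n,\mathcal{T},\mathcal{R}}$ indexed by $\mathcal{T}\subseteq[K_t]$, $|\mathcal{T}|=t_t$ and $\mathcal{R}\subseteq[K_r]$, $|\mathcal{R}|=t_r$, let transmitter $l$ store $W_{n,\mathcal{T},\mathcal{R}}$ whenever $l\in\mathcal{T}$, and let receiver $k$ store it whenever $k\in\mathcal{R}$. Exactly as in Theorem~\ref{thm:0}, each sub-library $\mathcal{W}_{\mathcal{T}_i}$ is then available at precisely the $t_t$ transmitters in $\mathcal{T}_i$, so the delivery breaks into $\binom{K_t}{t_t}$ sub-problems $\mathcal{P}(\mathcal{T}_i)$ handled in sequence; within each one the $t_t$ transmitters of $\mathcal{T}_i$ can act jointly as a single $t_t$-antenna MISO transmitter, while receiver $k$ still holds $W_{d_k,\mathcal{T}_i,\mathcal{R}}$ for every $\mathcal{R}\ni k$.

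Next, inside a fixed $\mathcal{P}(\mathcal{T}_i)$ I would run the standard coded-caching construction over the receiver index: for every $\mathcal{S}\subseteq[K_r]$ with $|\mathcal{S}|=t_r+1$ form the coded symbol $V_{\mathcal{S}}=\bigoplus_{k\in\mathcal{S}}W_{d_k,\mathcal{T}_i,\mathcal{S}\setminus\{k\}}$, which by the usual cache-cancellation argument is useful simultaneously for the $t_r+1$ receivers in $\mathcal{S}$ and irrelevant to all others, i.e.\ an order-$(t_r+1)$ message, and every still-needed sub-file appears in exactly one such symbol. It then remains to deliver the $\binom{K_r}{t_r+1}$ symbols $V_{\mathcal{S}}$ over a $t_t$-antenna delayed-CSIT MISO channel with $K_r$ receivers. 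Because we are in the regime $t_t+t_r\le K_r$, I would tile the receiver set: for each $\mathcal{G}\subseteq[K_r]$ with $|\mathcal{G}|=t_t+t_r$, split each $V_{\mathcal{S}}$ into $\binom{K_r-t_r-1}{t_t-1}$ equal pieces, one piece per group $\mathcal{G}\supseteq\mathcal{S}$ (there are exactly $\binom{K_r-t_r-1}{t_t-1}$ of them), and serve group $\mathcal{G}$ on its own. Within a fixed $\mathcal{G}$ we face precisely the setting of Lemma~\ref{lem:2} with $L=t_t$ antennas, $K=t_t+t_r$ receivers and message order $j=t_r+1$; the hypothesis $L\ge K-j+1$ holds with equality, so Lemma~\ref{lem:2} delivers that group's order-$(t_r+1)$ traffic with DoF
\begin{equation}
\frac{(t_t+t_r)-(t_r+1)+1}{\frac{1}{t_r+1}+\frac{1}{t_r+2}+\cdots+\frac{1}{t_t+t_r}}=\frac{t_t}{\frac{1}{t_r+1}+\frac{1}{t_r+2}+\cdots+\frac{1}{t_t+t_r}}=d_{\text{D}}(t_t,t_r).
\end{equation}

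Finally I would concatenate in time all the groups $\mathcal{G}$, all the $\binom{K_r}{t_r+1}$ symbol families, and all the sub-problems $\mathcal{P}(\mathcal{T}_i)$; by the symmetry of the placement every piece has the same length, every innermost sub-problem is solved at DoF $d_{\text{D}}(t_t,t_r)$, and since the number of these sub-problems and all splitting factors are constants independent of $P$, the concatenation preserves the DoF --- the same bookkeeping step already invoked at the end of the proof of Theorem~\ref{thm:0}. The lower convex envelope then follows by memory sharing between the integer operating points $(t_t,t_r)$. The part I expect to need the most care is exactly this last accounting: confirming that the nested decomposition (the $\binom{K_t}{t_t}$ transmitter-subset sub-problems, then the $\binom{K_r}{t_t+t_r}$ receiver-group sub-problems inside each one, then the splitting of every $V_{\mathcal{S}}$ into $\binom{K_r-t_r-1}{t_t-1}$ equal pieces) costs nothing in DoF, together with checking that these pieces are genuinely equal-sized so that the symmetric order-$(t_r+1)$ formulation of Lemma~\ref{lem:2} applies without modification.
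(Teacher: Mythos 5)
Your proposal is correct and follows essentially the same route as the paper: the placement of Theorem~\ref{thm:0}, order-$(t_r+1)$ XOR messages formed for each transmitter subset $\mathcal{T}$ and receiver group of size $t_t+t_r$, and delivery via Lemma~\ref{lem:2} with $L=t_t$, $K=t_t+t_r$, $j=t_r+1$, exactly as in Algorithm~\ref{alg:1}. If anything, you are more explicit than the paper about splitting each coded symbol into $\binom{K_r-t_r-1}{t_t-1}$ pieces so that each receiver group gets its own share --- a bookkeeping step Algorithm~\ref{alg:1} leaves implicit.
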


\begin{proof}
	Choosing a $t_t$-subset of transmitters, ${\cal T}$, and a subset of receivers of cardinality $t_t+t_r$, ${\cal S}$, according to the procedure introduced in \nameref{app}, results in a pair $({\cal T},{\cal S})$ such that messages of order $t_r+1$ (each beneficial to a $(t_r+1)$-subset of ${\cal S}$) can be generated using the content which is present in the caches of transmitters in ${\cal T}$. Drawing the analogy between the aforementioned set-up and the one in Lemma \ref{lem:2}, we observe that DoF order $t_r+1$ can be calculated from above lemma substituting $t_r+1$, $t_t$, and $t_t+t_r$ for $j$, $L$, and $K$ respectively:
	\begin{equation}
	d_{(t_r+1)}^*(t_t,t_t+t_r) \ge  \frac{t_t}{\frac{1}{t_r+1}+\frac{1}{t_r+2}+...+\frac{1}{t_r+t_t}}.
	\end{equation}
	Thus, the procedure described in Algorithm \ref{alg:1} results in the desired DoF.
\end{proof}

%%%%%%%%%%%%%%%%%%%%%%%%%%%%%%
%Algorithm 1
%%%%%%%%%%%%%%%%%%%%%%%%%%%%%%
\RestyleAlgo{boxruled}
\LinesNumbered
\IncMargin{1em}
\begin{algorithm}[ht]
	\caption{\sc{Delayed CSIT Delivery} \label{alg:1}}
	$t_t \leftarrow K_tM_t/N$\\
	$t_r \leftarrow K_rM_r/N$\\
	%$t \leftarrow t_t+t_r$\\
	$i \leftarrow 1$\\
	\ForEach{${\cal T} \subseteq [K_t], |{\cal T}|=t_t$}{
		\ForEach{${\cal S} \subseteq [K_r], |{\cal S}|=t_r+t_t$}{
			\ForEach{${\cal R} \subseteq {\cal S}, |{\cal R}|=t_r+1$}{
				$U_i \leftarrow \oplus_{r\in \mathcal{R}} W_{d_r,{\cal T},\mathcal{R}\setminus\{r\}}$\\
				$i \leftarrow i+1$
			}
			${\bf U} \leftarrow \Big(U_1,...,U_{t_t+t_r \choose t_r+1}\Big)$\\
			MAT$({\cal T},{\cal S},t_r+1,{\bf U}) ^*$
		}
	}
	\nonl \hrulefill\\
	\nonl \small{*MISO BC delivery procedure in \cite{dscit-maddah} with transmitters ${\cal T}$ and receivers ${\cal S}$ to transfer messages in {\bf U} of order $t_r+1$}
\end{algorithm}
\DecMargin{1em}
%%%%%%%%%%%%%%%%%%%%%%%%%%%%%%

Now, we proceed to state the main theorem that introduces the achievable DoF which can be attained in case that transmitters experience a feedback channel resulting in mixed CSIT.

\begin{thm}
	\label{thm:1}
	For an interference channel comprising $K_t$ transmitters and $K_r$ receivers, each transmitter with a cache of size $M_tF$ bits and each receiver equipped with a $M_rF$-bit memory, and a library including $N$ files each of size $F$ bits, if the state of CSIT availability is characterized by a factor $0 \le \alpha \le 1$, the following DoF is achievable:
		\begin{equation} \label{eq:5}
		d_{\text{M}}(t_t,t_r) = t_t+t_r+\alpha\Big( \frac{ t_t}{\frac{1}{t_r+1}+\frac{1}{t_r+2}+...+\frac{1}{t_r+t_t}}-t_t-t_r\Big),\\
		\end{equation} 
	where $t_t=\frac{K_tM_t}{N} \in \{1,2,...,K_t\}$, $t_r=\frac{K_rM_r}{N} \in \{0,1,...,K_r\}$, and $t_t+t_r \le K_r$ should be held true.  
\end{thm}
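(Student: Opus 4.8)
The plan is to combine the two delivery schemes already in hand---the full-CSIT scheme of Theorem~\ref{thm:0} and the delayed-CSIT scheme of Corollary~\ref{corol:1}---by a time-sharing argument that exploits the structure of each coherence block. Within each block of length $T_c$, the first $T_f=\alpha T_c$ seconds have no instantaneous CSIT (the delayed-CSIT regime), while the remaining $(1-\alpha)T_c$ seconds enjoy full CSIT. The idea is to devote the full-CSIT fraction of time to running the scheme of Theorem~\ref{thm:0}, which delivers content at DoF $t_t+t_r$, and to devote the no-CSIT fraction to running Algorithm~\ref{alg:1}, which delivers content at DoF $d_{\text{D}}(t_t,t_r)=t_t/\big(\tfrac{1}{t_r+1}+\cdots+\tfrac{1}{t_r+t_t}\big)$. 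Averaging over the block then yields an effective DoF of $(1-\alpha)(t_t+t_r)+\alpha\, d_{\text{D}}(t_t,t_r)$, which is exactly the claimed expression~\eqref{eq:5} after rearranging.

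Concretely, I would first fix the same file-splitting and cache-placement as in Theorem~\ref{thm:0}: split each $W_n$ into $\binom{K_t}{t_t}\binom{K_r}{t_r}$ subfiles $W_{n,\mathcal{T},\mathcal{R}}$, place $W_{n,\mathcal{T},\mathcal{R}}$ at transmitter $l$ iff $l\in\mathcal{T}$ and at receiver $k$ iff $k\in\mathcal{R}$. This placement is identical for both constituent schemes, so it is compatible with switching delivery strategy between the two portions of a coherence block. Next I would split each subfile further into two parts according to the ratio $(1-\alpha):\alpha$, designating one part to be delivered during full-CSIT intervals and the other during delayed-CSIT intervals. During full-CSIT intervals, the delivery proceeds exactly as in the proof of Theorem~\ref{thm:0} (reduction to $\binom{K_t}{t_t}$ multi-server subproblems, then Lemma~\ref{lem:1}), achieving DoF $t_t+t_r$ on that traffic. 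During delayed-CSIT intervals, the delivery proceeds exactly as in Algorithm~\ref{alg:1} and Corollary~\ref{corol:1}, achieving DoF $d_{\text{D}}(t_t,t_r)$ on that traffic. Since the constraint $t_t+t_r\le K_r$ is assumed, Corollary~\ref{corol:1} applies; and $t_t\ge 1$ ensures the placement is well-defined. Summing the bits delivered over one coherence block and dividing by $T_c\log P$ gives the weighted-average DoF, and taking $P\to\infty$ establishes achievability.

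The main subtlety---and the step I would be most careful about---is verifying that the delayed-CSIT scheme of Algorithm~\ref{alg:1}, which was designed in Corollary~\ref{corol:1} for a channel that is \emph{delayed} at every time slot, still runs correctly when it is interleaved with full-CSIT slots and only ever granted the \emph{initial} $T_f$-second sub-slots of successive coherence blocks. The point to check is that the MAT-type scheme in \cite{dscit-maddah} only requires CSIT with a \emph{one-slot delay}: whatever channel realization occurs during a delayed-CSIT sub-slot is learned by the transmitters via feedback and becomes available before the next sub-slot designated for delayed-CSIT delivery. Hence one should treat the concatenation of the delayed-CSIT sub-slots across blocks as a single ``virtual'' delayed-CSIT channel on which Algorithm~\ref{alg:1} operates without modification; the intervening full-CSIT sub-slots carry the Theorem~\ref{thm:0} traffic and do not interfere with the bookkeeping of either scheme. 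Once this separation is justified, the DoF accounting is a routine convex combination, and a final algebraic simplification---factoring $t_t+t_r$ out of the $(1-\alpha)$ term and collecting the $\alpha$ terms---produces the stated formula~\eqref{eq:5}.
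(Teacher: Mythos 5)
Your proposal is correct and follows essentially the same route as the paper: the paper's proof is exactly the time-sharing argument, running Algorithm~\ref{alg:1} during the $\alpha$ fraction of each coherence block with only delayed CSIT and Algorithm~\ref{alg:2} during the remaining $(1-\alpha)$ fraction, and taking the convex combination of the DoFs from Corollary~\ref{corol:1} and Theorem~\ref{thm:0}. Your additional care about the compatibility of the common placement and about the MAT scheme tolerating interleaved full-CSIT sub-slots is a useful elaboration of details the paper leaves implicit, but it does not change the argument.
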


\begin{proof}
	Availability of mixed CSIT characterized by the parameter $\alpha$ shows that during $\alpha$ portion of time, transmitters access CSIT with a delay of one time slot, and have  current CSIT for the rest of time.
	
	Thus, in each time coherence block $T_c$, during the training phase Algorithm \ref{alg:1} is used for delivery, and, in the remaining period Algorithm \ref{alg:2} is used. Therefore, it is clear that the total DoF equals the linear combination of the ones in Theorem \ref{thm:0} and Corollary \ref{corol:1} with coefficients $(1-\alpha)$ and $\alpha$ respectively, which leads to \eqref{eq:5}.
\end{proof}
%%%%%%%%%%%%%%%%%%%%%%%%%%%%%%

%%%%%%%%%%%%%%%%%%%%%%%%%%%%%%
%Numerical Illustrations and Conclusions
%%%%%%%%%%%%%%%%%%%%%%%%%%%%%%
\section{Numerical Illustrations and Conclusions}
\label{sec:Num}

In this section, we examine the result of Theorem \ref{thm:1} in order to gain an insight into how each of parameters which describe the system model contributes to the achievable DoF in \eqref{eq:5}. To this end, we consider two main parameters of the system and investigate the interaction of achievable DoF and these parameters. The first one is the portion of each time block during which perfect CSIT is not available denoted by $\alpha$ as we previously defined. The latter one is the portion of all cache size allocated to the receivers to the total cache size in the transmit side, which we denote by $\beta = \frac{t_r}{t_t}$.

Fig. \ref{fig:3} represents the impact of $\alpha$ on the achievable DoF in \eqref{eq:5}. As shown in the diagram, we can see that the DoF is inversely related to $\alpha$ in a linear manner, for fixed cache sizes at both transmit and receive sides. For a fixed $\beta$, it also can be figured out that increasing total cache size of system not only leads to a greater DoF, but also steepens the slope of the diagram of DoF versus $\alpha$, which means that the more the system aggregate cache size is, the more the influence of full CSIT on enhancing DoF will be.  

%%%%%%%%%%%%%%%%%%%%%%%%%%%%%%
%Fig. 3
%%%%%%%%%%%%%%%%%%%%%%%%%%%%%%
\begin{figure}[h]
	\centering
	\includegraphics[width=0.7\textwidth]{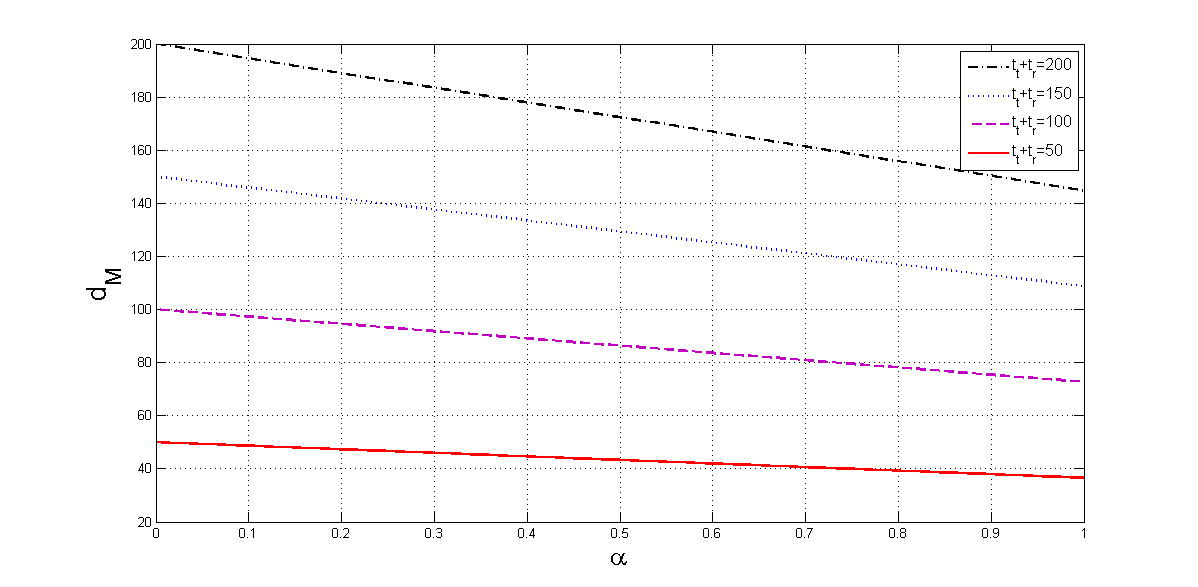}
	\caption{DoF for mixed CSIT, $d_M$, versus CSIT availability amount, $\alpha$, for fixed $\beta=1$ and different aggregate cache sizes  $t_t+t_r$.}
	\label{fig:3}
%%%%%%%%%%%%%%%%%%%%%%%%%%%%%%
%Fig. 4
%%%%%%%%%%%%%%%%%%%%%%%%%%%%%%
	\includegraphics[width=0.7\textwidth]{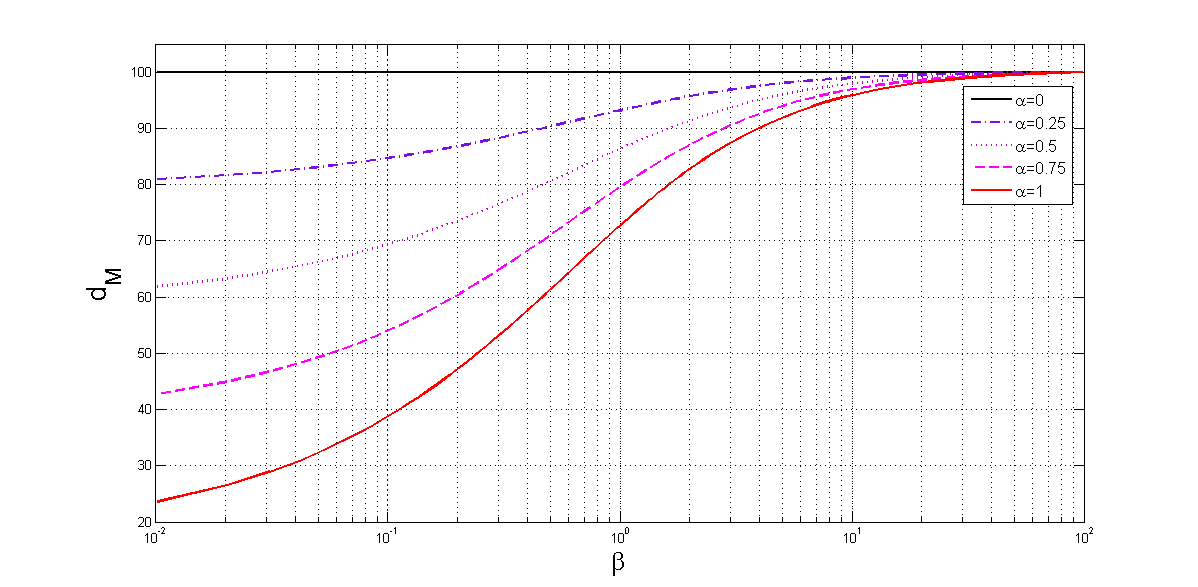}
	\caption{DoF for mixed CSIT, $d_M$, versus the ratio of cache size on receive side to the one on transmit side, $\beta$, for fixed  $t_t+t_r=100$, and different CSIT availability values $\alpha$.}
	\label{fig:4}
\end{figure}
%%%%%%%%%%%%%%%%%%%%%%%%%%%%%%

Also, the effect of $\beta$ on DoF is shown in Fig. \ref{fig:4}. This figure indicates how the caches at receive side can be more beneficial in practical conditions than the caches at transmit side. In this figure, we assume a fixed normalized total cache size $t_t+t_r=100$. The case $\alpha=0$ refers to the setup in \cite{Naderalizadeh16} which indicates equal value for allocating memory at either transmit or receive side for the current CSIT case. For $\alpha>0$, i.e. mixed CSIT case, by increasing $\beta$ (allocating more storage to the receive side), DoF will increase.

Therefore, considering an interference channel with caches at both transmit and receive sides, the present study demonstrated that equipping receivers with caches benefits content delivery phase in terms of DoF more than doing so for transmitters when a more practical scenario is encountered. In particular, we assumed a pragmatic approach of providing CSIT including a training/feedback process, leading to a combination of current and delayed CSIT availability, namely mixed CSIT. We introduced an achievable DoF in such setting that accentuates the superiority of receive-side caches over transmitters' memories for lower qualities of feedback.

%%%%%%%%%%%%%%%%%%%%%%%%%%%%%%
%References
%%%%%%%%%%%%%%%%%%%%%%%%%%%%%%

%%%%%%%%%%%%%%%%%%%%%%%%%%%%%%
\newpage
%%%%%%%%%%%%%%%%%%%%%%%%%%%%%%
%Appendix
%%%%%%%%%%%%%%%%%%%%%%%%%%%%%%
\appendix{Proof of Theorem \ref{thm:0} with Detailed Delivery Procedure}
\label{app}
\begin{proof}
	Suppose receiver $k$, which has demanded $W_{d_k}$ in the second phase, and all the subfiles of form $W_{d_k,{\cal T, }{\cal R}}$, where $k \not \in {\cal R}$ should be delivered to this receiver. Note that $W_{d_k,{\cal T, }{\cal R}}$ is available in:
	\begin{itemize}
	\item[1.] all the $t_r$ receivers $k \in {\cal R}$, and
	\item[2.] all the $t_t$ transmitters $l \in {\cal T}$.
	\end{itemize}
	If $W_{d_k,{\cal T, }{\cal R}}$ is broadcasted to all the receivers cooperatively by $t_t$ transmitters $l \in {\cal T}$, then 
	\begin{itemize}
	\item[1.] receiver $k$ will benefit from this subfile to decode $W_{d_k}$,
	\item[2.] $t_r$ receivers in $ {\cal R}$ can eliminate this interference with the help of their cache contents,
	\item[3.] this interference can be zero-forced at an arbitrary subset of receivers of size $t_t-1$, and
	\item[4.] this will inevitably cause interference for the other $K_r-(t_r+t_t)$ receivers.
	\end{itemize}	 	
	From the above observations, one can see that it is possible to design coded messages which are useful for $t_t+t_r$ receivers leading to DoF of $t_t+t_r$.
	
	In order to design such common message, we further divide each sub-file into ${K_r-t_r-1 \choose t_t-1}$ mini-files, such that during the transmission of each mini-file, interference cancellation will be performed for a definite subset of receivers that neither have the mini-file in their caches, nor need to receive the mini-file to restore their demands:
	\begin{equation}
	W_{n,{\cal T},{\cal R}} = (W_{n,{\cal T},{\cal R}}^{\cal R'}:{\cal R'} \subseteq [K_r-t_r-1] , |{\cal R'}| = t_t-1 ).
	\end{equation}
	
	Next, we define an arbitrary bijective function $f_{K_r}^{\cal U}:[K_r-|{\cal U}|]\rightarrow [K_r]\setminus {\cal U}$ for each ${\cal U} \subseteq [K_r]$, thus to send mini-file $W_{d_k,{\cal T},{\cal R}}^{\cal R'}$ to the receiver $k$, we perform a pre-coding through the transmitters in ${\cal T}$ to zero-force the interference at the receivers in $f_{K_r}^{{\cal R}\cup \{k\}}({\cal R'})$, which denotes  a $(t_t-1)$-subset of receivers and is defined as follows:
	\begin{equation}
	f_{K_r}^{{\cal R}\cup \{k\}}({\cal R'}) = \{f_{K_r}^{{\cal R}\cup \{k\}}(k') :k' \in {\cal R'} \}.
	\end{equation}
	
	In the content delivery phase, we claim that it is possible to benefit a set of $t_t+t_r$ receivers simultaneously. To explain this, let's consider a subset ${\cal S} \subseteq [K_r]$ of cardinality $t_t+t_r$ and a $t_t$-subset ${\cal T} \subseteq [K_t]$. For this specific ${\cal S}$, a $(t_t+t_r)$-subset of receivers, we denote all possible subsets of ${\cal S}$ of cardinality $t_r+1$ by ${\cal S}_i,i=1,...,{t_t+t_r \choose t_r+1}$, which means ${\cal S}_i \subseteq {\cal S}$ and $|{\cal S}_i|=t_r+1$. We then determine a $t_t$-by-$1$ vector ${\bf u}_{{\cal T},{\cal S}}^{{\cal S}_i}$ for each ${\cal S}_i$ such that makes the below statements hold true:
	\begin{eqnarray}
	\label{eq:3}
	{\bf u}_{{\cal T},{\cal S}}^{{\cal S}_i} &\perp& {\bf h}_j^{\cal T}, \quad \forall j \in {\cal S}\setminus {\cal S}_i, \nonumber\\
	{\bf u}_{{\cal T},{\cal S}}^{{\cal S}_i} &\not\perp& {\bf h}_j^{\cal T}, \quad \forall j \in {\cal S}_i,
	\end{eqnarray}
	where ${\bf h}_j^{\cal T} \triangleq [h_{j,l_1},...,h_{j,l_{|{\cal T}|}}]^T$ is defined for any ${\cal T}=\{l_1,...,l_{|{\cal T}|}\} \subseteq [K_t]$.
	
	It can be shown that the above conditions are true with high probability if the size of the field from which transmitted symbols are chosen is much greater than $(t_r+1){K_r \choose t}{t \choose t_r+1}$ \cite{Shariatpanahi16}.
	
	Next, for each ${\cal S}_i$ define:
	\begin{equation}
	G_{\cal T}({\cal S}_i) = L_{r \in {\cal S}_i} \Big(W_{d_r,{\cal T},{\cal S}_i\setminus \{r\}}^{{\cal R}_i}\Big),
	\end{equation}
	with ${\cal R}_i$ is required to be such that $f_K^{{\cal S}_i}({\cal R}_i) = {\cal S}\setminus {\cal S}_i$, and $ L_{r \in {\cal S}_i}$ represents a random linear combination of the corresponding mini-files for all $r \in {\cal S}_i$. In addition, $W_{d_r,{\cal T},{\cal S}_i\setminus \{r\}}^{{\cal R}_i}$ is a mini-file of the file demanded by receiver $r$, which exists in the caches of transmitters in ${\cal T}$ and receivers in ${\cal S}_i\setminus \{r\}$, and we need to perform zero-forcing to cancel its interference to the receivers in ${\cal S}\setminus {\cal S}_i$.
	
	Next, for such $t_t$-subset ${\cal T}$ and $(t_t+t_r)$-subset ${\cal S}$ we define:
	\begin{equation}\label{eq:commonmessage}
	{\bf X}_{\cal T}({\cal S}) = \sum_{{\cal S}_i \subseteq {\cal S},|{\cal S}_i|=t_r+1} {\bf u}_{{\cal T},{\cal S}}^{{\cal S}_i}G_{\cal T}({\cal S}_i).
	\end{equation}
	
	We repeat the aforementioned procedure ${t_t+t_r-1 \choose t_r}$ times with different random coefficients to obtain different $t_t$-by-$1$ vectors ${\bf X}_{\cal T}^{\omega}({\cal S}), \omega=1,...,{t_t+t_r-1 \choose t_r}$ as follows:
	\begin{eqnarray}
	\label{eq:4}
	G_{\cal T}^\omega({\cal S}_i) &=& L_{r \in {\cal S}_i}^\omega \Big(W_{d_r,{\cal T},{\cal S}_i\setminus \{r\}}^{{\cal R}_i}\Big), \\
	{\bf X}_{\cal T}^\omega({\cal S}) &=& \sum_{{\cal S}_i \subseteq {\cal S},|{\cal S}_i|=t_r+1} {\bf u}_{{\cal T},{\cal S}}^{{\cal S}_i}G_{\cal T}^\omega({\cal S}_i),
	\end{eqnarray}
	where the random coefficients for calculating linear combinations in $L_{r \in {\cal S}_i}^\omega$'s are different for different $\omega$'s in order to achieve independent linear combinations of the corresponding mini-files, with high probability.
	
	Then, for these ${\cal T}$ and ${\cal S}$ sets, the transmitters in ${\cal T}$ transmit the following block:
	\begin{equation}
	\Big[{\bf X}_{\cal T}^1({\cal S}) ,..., {\bf X}_{\cal T}^{t_t+t_r-1 \choose t_r}({\cal S}) \Big].
	\end{equation}
	After accomplishing the transmission of the above block, a different pair $({\cal T},{\cal S})$ should be chosen (such that ${\cal T}\subseteq[K_r],|{\cal T}|=t_t$, and ${\cal S}\subseteq[K_t],|{\cal S}|=t_t+t_r$), and the same procedure is performed until all such pairs have been taken into account.
	
	It can be shown that, after the above procedure is concluded, all the receivers will be able to decode their requested files. Also, since each message in \eqref{eq:commonmessage} benefits $t_t+t_r$ receivers simultaneously, achievable DoF of this delivery scheme is $\min(t_t+t_r,K_r)$. The procedure to transmit the desired files is shown in Algorithm \ref{alg:2}.
\end{proof}

%%%%%%%%%%%%%%%%%%%%%%%%%%%%%%
%Algorithm 2
%%%%%%%%%%%%%%%%%%%%%%%%%%%%%%
	\RestyleAlgo{boxruled}
	\LinesNumbered
	\IncMargin{1em}
	\begin{algorithm}[ht]
		\caption{{\sc Current CSIT Delivery} \label{alg:2}}
		$t_t \leftarrow K_tM_t/N$\\
		$t_r \leftarrow K_rM_r/N$\\
		\ForEach{${\cal T} \subseteq [K_t], |{\cal T}|=t_t$}{
			\ForEach{${\cal S} \subseteq [K_r], |{\cal S}|=t_r+t_t$}{
				\ForEach{${\cal R} \subseteq {\cal S}, |{\cal R}|=t_r+1$}{
					Design ${\bf u}_{{\cal T},{\cal S}}^{{\cal R}}$ such that: for all $j\in {\cal S}$, ${\bf h}_j^{\cal T} \perp {\bf u}_{{\cal T},{\cal S}}^{{\cal R}}$ if $j \notin {\cal R}$, and ${\bf h}_j^{\cal T} \not\perp {\bf u}_{{\cal T},{\cal S}}^{{\cal R}}$ if $j \in {\cal R}$
				}
				\ForEach{$\omega = 1,...,{t_t+t_r-1 \choose t_r}$}{
					\ForEach{${\cal R} \subseteq {\cal S}, |{\cal R}|=t_r+1$}{
					${\cal R}' \leftarrow {f_{K_r}^{{\cal R}}}^{-1}({\cal S}\setminus {\cal R})$\\
						$G_{\cal T}^\omega({\cal R}) = L_{r \in {\cal R}}^\omega \Big(W_{d_r,{\cal T},{\cal R}\setminus \{r\}}^{{\cal R}'}\Big)$
					}
					${\bf X}_{\cal T}^\omega({\cal S}) \leftarrow \sum_{{\cal R} \subseteq {\cal S},|{\cal R}|=t_r+1} {\bf u}_{{\cal T},{\cal S}}^{{\cal R}}G_{\cal T}^\omega({\cal R})$
				}
				transmit ${\bf X}_{\cal T}({\cal S}) = \Big[{\bf X}_{\cal T}^1({\cal S}) ,..., {\bf X}_{\cal T}^{t_t+t_r-1 \choose t_r}({\cal S}) \Big]$
			}
		}
	\end{algorithm}
	\DecMargin{1em}

\begin{thebibliography}{99}	
\bibitem{Cisco}
Cisco, ``Cisco visual networking index: Forecast and methodology, 2015-2020,'' \emph{White Paper}, Feb. 2016.

\bibitem{Pathan07}
A. M. K. Pathan and R. Buyya, ``A taxonomy and survey of content delivery networks,'' \emph{Grid Computing and Distributed Systems Laboratory, University of Melbourne, Technical Report}, 2007.

\bibitem{Shanmugam13}
K. Shanmugam, N. Golrezaei,  A. G. Dimakis, A F. Molisch, and G. Caire, ``FemtoCaching: Wireless content delivery through distributed caching helpers,'' \emph{IEEE Transactions on Information Theory}, vol. 59, no. 12, 2013, pp. 8402--8413.

\bibitem{Golrezaei14}
N. Golrezaei, A G. Dimakis, and A. F. Molisch, ``Scaling behavior for Device-to-Device communications with distributed caching,'' \emph{IEEE Transactions on Information Theory}, vol. 60, no. 7, 2014, pp. 4286--4298.

\bibitem{Bastug14}
E. Bastug, M. Bennis, and M. Debbah, ``Living on the edge: The role of proactive caching in 5G wireless networks,'' \emph{IEEE Communications Magazine}, vol. 52, no. 8, 2014, pp. 82--89.

\bibitem{Perabathini15}
B. Perabathini, E. Baştuğ, M. Kountouris, M. Debbah, and A. Conte, ``Caching at the edge: A green perspective for 5G networks,'' \emph{in Proc.  IEEE International Conference on Communication Workshop}, 2015.

\bibitem{Gitzenis13}
S. Gitzenis, G. S Paschos, and L. Tassiulas, ``Asymptotic laws for joint content replication and delivery in wireless networks,'' \emph{IEEE Transactions on Information Theory}, vol. 59, no. 5, 2013, pp. 2760--2776.

\bibitem{Maddah-Ali14}
M.A. Maddah-Ali and U. Niesen, ``Fundamental limits of caching,'' \emph{IEEE Transactions on Information Theory}, vol. 60, no. 5, 2014, pp. 2856--2867.

\bibitem{Ji16} 
M. Ji, G. Caire, and A. F. Molisch, ``Fundamental limits of caching in wireless D2D networks,'' \emph{IEEE Transactions on Information Theory}, vol. 62, no. 2, 2016, pp. 849--869.

\bibitem{Karamchandani16}
N. Karamchandani, U. Niesen, M. A. Maddah-Ali, and S. N. Diggavi, ``Hierarchical coded caching,'' \emph{IEEE Transactions on Information Theory}, vol. 62, no. 6, 2016, pp. 3212--3229.


\bibitem{Shariatpanahi16}
S. P. Shariatpanahi, S. A. Motahari, and B. Hossein Khalaj, ``Mutli-server coded caching,'' \emph{IEEE Transactions on Information Theory}, vol. 62, no. 12, 2016, pp. 7253--7271.

\bibitem{Naderalizadeh16}
N. Naderializadeh, M. A. Maddah-Ali, and A. S. Avestimehr, ``Fundamental limits of cache-aided interference management,'' arXiv preprint, arXiv:1602.04207, Feb.
2016.

\bibitem{Zhang15}
J. Zhang and P. Elia, ``Fundamental limits of cache-aided wireless BC: Interplay of coded-caching and CSIT feedback,'' arXiv preprint, arXiv:1511.03961, 2015.

\bibitem{Zhang16}
J. Zhang and P. Elia, ``The synergistic gains of coded caching and delayed feedback,'' arXiv preprint, arXiv:1604.06531, 2016.

\bibitem{Love08}
D. Love, R. Heath, V. Lau, D. Gesbert, B. Rao, and M. Andrews,
``An overview of limited feedback in wireless communication systems,''
\emph{IEEE Journal on Selected Areas in Communications}, vol. 26, no. 8, pp. 1341--1365, Oct. 2008.

\bibitem{Shariatpanahi14}
S. P. Shariatpanahi, H. Shah-Mansouri, and B. Hossein Khalaj, ``Caching gain
in wireless networks with fading: A multi-user diversity perspective,'' \emph{in Proc. IEEE WCNC}, 2014.

\bibitem{Maddah-Ali15}
M. A. Maddah-Ali and U. Niesen, ``Cache-aided interference channels,''
\emph{in Proc. of the IEEE International Symposium on Information
Theory (ISIT 2015)}, Hong-Kong, China, 2015.

\bibitem{Wigger16}
M. A. Wigger, R. Timo, and S. Shamai, ``Complete interference mitigation through receiver-caching in wyner’s networks,'' arXiv preprint, arXiv:1605.03761, 2016.

\bibitem{Gou12}
T. Gou and S. A. Jafar, ``Optimal use of current and outdated channel state information: Degrees of freedom of the MISO BC with mixed CSIT,'' \emph{IEEE Communications Letters}, vol. 16, no. 7, 2012, pp. 1084--1087.

\bibitem{dscit-maddah}M. A. Maddah-Ali and D. Tse, ``Completely stale transmitter channel state information is still very useful,'' \emph{IEEE Transactions on Information Theory}, vol. 58, no. 7, 2012, pp. 4418--4431.
\end{thebibliography}
\end{document}